\newtheorem{theorem}{Theorem}
\newtheorem{lemma}{Lemma}
\newtheorem{proposition}[theorem]{Proposition}
\theoremstyle{definition}
\newtheorem{definition}{Definition}
\theoremstyle{remark}
\newtheorem{remark}{Remark}
\newcommand{\PU}{\ensuremath {\mathit{PU}}}
\newcommand{\n}{\ensuremath {\mathit{n}}}
\newcommand{\SU}{\ensuremath {\mathit{SU}}}
\begin{document}

\title{Efficient Spectrum Availability Information Recovery for Wideband DSA Networks: A Weighted Compressive Sampling Approach}

\author{Bassem~Khalfi,~\IEEEmembership{Student~Member,~IEEE,}
        Bechir~Hamdaoui,~\IEEEmembership{Senior~Member,~IEEE,}
        Mohsen~Guizani,~\IEEEmembership{Fellow,~IEEE,}
        and~Nizar~Zorba,~\IEEEmembership{Senior~Member,~IEEE}

\thanks{Bassem Khalfi and Bechir Hamdaoui are with Oregon State University, Corvallis, OR, USA (e-mail: khalfib@eecs.orst.org).}
\thanks{Mohsen Guizani and Nizar Zorba are with Qatar University, Doha, Qatar.}
\thanks{This work was supported in part by the US National Science Foundation (NSF) under NSF award CNS-1162296.}}

\maketitle
\begin{abstract}
Compressive sampling has great potential for making wideband spectrum sensing possible at sub-Nyquist sampling rates. As a result, there have recently been research efforts that leverage compressive sampling to enable efficient wideband spectrum sensing. These efforts consider homogenous wideband spectrum, where all bands are assumed to have similar PU traffic characteristics. In practice, however, wideband spectrum is not homogeneous, in that different spectrum bands could present different PU occupancy patterns. In fact, the nature of spectrum assignment, in which applications of similar types are often assigned bands within the same block, dictates that wideband spectrum is indeed heterogeneous.
In this paper, we consider heterogeneous wideband spectrum, and exploit its inherent, block-like structure to design efficient compressive spectrum sensing techniques that are well suited for heterogeneous wideband spectrum. We propose a weighted $\ell_1-$minimization sensing information recovery algorithm that achieves more stable recovery than that achieved by existing approaches while accounting for the variations of spectrum occupancy across both the time and frequency dimensions. In addition, we show that our proposed algorithm requires a lesser number of sensing measurements when compared to the state-of-the-art approaches.
\end{abstract}

\begin{IEEEkeywords}
Wideband spectrum sensing; compressive sampling; heterogeneous wideband spectrum occupancy.
\end{IEEEkeywords}

\section{Introduction}
\label{sec:intro}
Spectrum sensing is a key component of cognitive radio networks (CRNs), essential for enabling dynamic and opportunistic spectrum access~\cite{akyildiz2011,paisana2014radar}.
It essentially allows secondary users (\SU s) to know whether and when a licensed band is available prior to using it so as to avoid harming primary users (\PU s).
Due to its vital role, over the last decade or so, a tremendous amount of research has focused on developing techniques and approaches that enable efficient spectrum sensing~\cite{axell2012spectrum,patil2016survey}. Most of the focus has, however, been on single-band spectrum sensing, and the focus on wideband spectrum sensing has recently received increased attention~\cite{sun2013wideband}.

The key advantage of wideband spectrum sensing over its single-band counterpart is that it allows \SU s to locate spectrum opportunities in wider ranges of frequencies by performing spectrum sensing across multiple bands at the same time. Being able to perform wideband spectrum sensing is becoming a crucial requirement of next-generation CRNs, especially with the emergence of IoT and 5G technologies~\cite{al2016information,niu2015survey,gohil20135g}. This wideband spectrum sensing requirement is becoming even more stringent with FCC's recent new rules for opening up millimeter wave bands' use for wireless broadband devices in frequencies above 24 GHz~\cite{FCC-mmW-16}.

The challenge, however, with wideband spectrum sensing is that it requires high sampling rates, which can incur significant sensing overhead in terms of energy, computation, and communication.
Motivated by the sparsity nature of spectrum occupancy~\cite{chen2014survey} and in an effort to address the overhead caused by these high sampling rates, researchers have focused on exploiting compressive sampling to make wideband spectrum sensing possible at sub-Nyquist sampling rates (e.g.~\cite{sharma2016application,Qin2016TSP,Mishali2010JSTSP,Needell2010JSTSP,qin2016data}).

These research efforts have focused mainly on {\em homogenous} wideband spectrum, meaning that the entire wideband spectrum is considered as one single block with multiple bands, and the sparsity level is estimated across all bands and considered to be the same for the entire wideband spectrum.
However, in spectrum assignment, applications of similar types (TV, satellite, cellular, etc.) are often assigned bands within the same band block, and different application types exhibit different traffic occupancy patterns and behaviors. This suggests that wideband spectrum is block-like {\em heterogeneous}, in that band occupancy patterns are not the same across the different band blocks.
Therefore, sparsity levels may vary significantly from one block to another, a trend that has also been confirmed by recent measurement studies~\cite{chen2014survey,yilmaz2016determination}.

In this paper, we exploit this inherent, block-like structure of wideband spectrum to design efficient compressive spectrum sensing techniques that are well suited for {\em heterogeneous} wideband spectrum access in {\em noisy} wireless environments. To the best of our knowledge, this is the first work that exploits this spectrum occupancy heterogeneity inherent to wideband spectrum to develop efficient compressive sensing techniques. Specifically, we propose a wideband sensing information recovery algorithm that is more stable and robust than existing approaches. The proposed technique accounts for spectrum occupancy variations across both time and frequency, and requires a lesser number of sensing measurements when compared to the state-of-the-art approaches.

\subsection{Related Work}
The work of Tian et al.~\cite{tian2007compressed} is the first to use compressive sampling for wideband spectrum sensing. Since then, a lot of work has exploited compressive sampling theory to enable wideband sensing at sub-Nyquist sampling rates~\cite{tian2012cyclic,Qin2016TSP,Mishali2010JSTSP,Needell2010JSTSP,tropp2010beyond,qin2016data,sun2016cooperative}. A common factor among these works is the assumption that the sparsity level is fixed over time.
In an effort to relax this assumption, the authors in~\cite{wang2012sparsity} propose a two-step algorithm, where at each sensing period, the sparsity level is first measured, and then used to adjust the total number of measurements. The issue, however, with this approach lies in its computational complexity. An autonomous compressive spectrum sensing algorithm is proposed in~\cite{jiang2016achieving}  that does not require the knowledge of the instantaneous sparsity level. However, this technique still assumes that the sparsity level is bounded and also \PU's signal is wide-sense stationary which is not usually guaranteed in practice. Cooperative wideband spectrum sensing is also considered in~\cite{sun2016cooperative} where a multi-rate sub-Nyquist recovery approach is proposed and analyzed under diverse fading channels.

There have also been some research efforts that aim at exploiting additional knowledge about the signal to further improve the sensing information recovery~\cite{vaswani2010modified,friedlander2012recovering,needell2016weighted,xu2010breaking,candes2008enhancing,khajehnejad2011analyzing,ahsen2015error}.
For instance,~\cite{vaswani2010modified} proposes a $\ell_1-$minimization based approach that exploits knowledge about the support\footnote{The support corresponds to the signal components that are non-zero.} of the sparse signal to recover information from noise-free measurements.
The authors in~\cite{friedlander2012recovering}
also exploit signal support information, but for recovering signals with noisy measurements. Their technique assigns a weight less than one to each index of the estimate of the support and one to all other indexes.
They show that this recovery approach is more stable and robust than standard $\ell_1-$minimization approaches when $50\%$ of the support is estimated correctly. This approach has been generalized for multiple weights in~\cite{needell2016weighted}, addressing the case where the support is estimated with different confidence levels.
These approaches, however, work well in applications where the support does not change much over time, like real-time dynamic MRI~\cite{vaswani2010modified} and video/audio decoding~\cite{friedlander2012recovering,needell2016weighted} applications. In the wideband spectrum sensing case where the signal support changes over time, an estimate of the support is too difficult to acquire in advance, making these approaches unsuitable.

There have also been attempts that exploit block sparsity information in signals to further improve signal recovery, though not in the context of wideband spectrum sensing~\cite{khajehnejad2011analyzing,ahsen2015error}.
For instance, the authors in~\cite{khajehnejad2011analyzing} consider noise-free measurements where the signal support is divided into two different subclasses with different sparsity levels. The focus of this work is on deriving the optimal weights that lead to the best recovery.
Also, in~\cite{ahsen2015error}, the authors study compressive sampling schemes for signals that only a few of their blocks are dense, whereas the rest of the blocks are zeros.

Unlike these previous works and as motivated by the real nature of wideband spectrum sparsity structure, our proposed framework considers time-varying and heterogeneous wideband spectrum occupancy.
We exploit this fine-grained sparsity structure to propose, which to the best of our knowledge, the first spectrum sensing information recovery scheme for {\em heterogeneous} wideband spectrum sensing with {\em noisy} measurements.

\subsection{Our Key Contributions}
In this paper, we make the following contributions:

\begin{itemize}

\item
We propose a weighted $\ell_1-$minimization algorithm that exploits the block-like, sparsity structure of the heterogeneous wideband spectrum to provide an efficient recovery of spectrum occupancy information in noisy CRN environments.
We design the weights of the algorithm in a way that spectrum blocks that are more likely to be occupied are favored during the search, thereby increasing the recovery performance.

\item We prove that our recovery algorithm outperforms existing approaches in terms of stability and robustness, and reduces sensing overhead by requiring lesser numbers of measurements.
     It does so while accounting for spectrum occupancy variations across both time and frequency.

\item We derive lower bounds on the probability of spectrum occupation, and use them to determine the sparsity levels that lead to further reduction in the sensing overhead.
\end{itemize}

It is important to mention that our proposed weighted compressive sampling framework, including the derived theoretical results, is not restricted to wideband spectrum sensing applications only. It can be applied to any other application where the signal to be recovered possesses block-like sparsity structure. We are hoping that this work
can be found useful for finding efficient solution methodologies to problems (with similar characteristics) in other disciplines and domains.

\subsection{Roadmap}
The remainder of the paper is structured as follows. In Section~\ref{sec:system_model}, we present our system model and the PU bands' occupancy model. Next, our proposed approach along with its performance analysis are presented in Section~\ref{sec:proposed}. The numerical evaluations are then presented in Section~\ref{sec:numerical_results}. Finally, our conclusions are given in Section~\ref{sec:conclusion}.

\section{Wideband Spectrum Sensing Model}
\label{sec:system_model}
In this section, we begin by presenting the studied heterogeneous wideband spectrum model. Then, we present the spectrum sensing preliminaries and setup.

\subsection{Wideband Occupancy Model}
We consider a heterogeneous wideband spectrum access system containing \n~frequency bands as illustrated by Fig.~\ref{fig:band_ocup}(a).
We assume that wideband spectrum accommodates multiple different types of user applications, where applications of the same type are allocated frequency bands within the same block. Therefore, we consider that wideband spectrum has a block-like occupation structure, where each block (accommodating applications of similar type) has different occupancy behavioral characteristics. The wideband spectrum can then be grouped into $g$ disjoint contiguous blocks, $\mathcal{G}_i, i=1,...,g$, 
with $\mathcal{G}_i\bigcap\mathcal{G}_j=\emptyset$ for $i\neq j$. Each block, $\mathcal{G}_i$, is a set of $n_i$ contiguous bands.
Like previous works~\cite{Sun2014TSP},
the state of each band $i$, $\mathcal{H}_i$, is modelled as $\mathcal{H}_i \thicksim \text{Bernoulli}(p_i)$
with parameter $p_i\in[0,1]$ ($p_i$ is the probability that band $i$ is occupied by a \PU). Assuming that the bands' occupancies within a block are independent from one another, then the average number of occupied bands is $\bar{k}_j
= \sum_{i\in \mathcal{G}_j } p_i$ for $j=1,...,g$.

Recall that one of the things that distinguishes this work from others is the fact that we consider a {\em heterogeneous} wideband spectrum; formally, this means that the average number $\bar{k}_j$ of the occupied bands in block $j$ can vary significantly from one block to another. The average occupancies, however, of the different bands within a given block are close to one another; i.e., $p_i\approx p_j$ for all $i,j\in \mathcal{G}_j$. Our proposed framework exploits such a block-like occupancy structure stemming from the wideband spectrum heterogeneity to design efficient compressive wideband spectrum sensing techniques.
For this, we assume that the blocks have sufficient different average sparsity levels (otherwise, blocks with similar sparsity levels are merged into one block with a sparsity level corresponding to their average). This is supported by practical observations where typically each block of bands is assigned to a particular application, and the average occupancy could be quite different from one block to another~\cite{yilmaz2016determination,mehdawi2015spectrum,harrold2011long}.
These averages are often available via measurement studies, and can easily be estimated, or provided by spectrum operators~\cite{mehdawi2015spectrum}.

\begin{figure}[t]
\centering{
\includegraphics[width=.9\columnwidth]{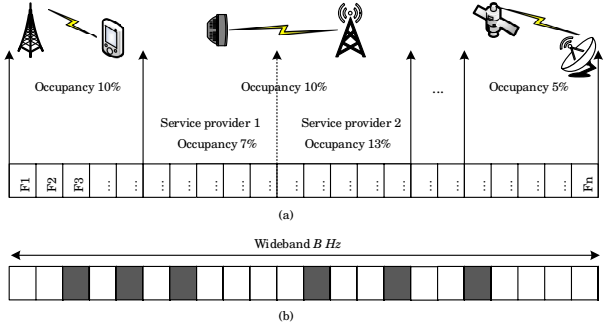}
\caption{$n$ frequency bands occupied by heterogeneous applications with different occupancy rates. The grey bands are occupied by primary users while the white bands are vacant. (a) is the statistical allocation while (b) is a realization of allocation in a given region at a given time slot. }
\label{fig:band_ocup}}
\end{figure}

\subsection{Secondary System Model}
We consider a SU performing the sensing of the entire wideband spectrum as illustrated by Fig.~\ref{fig:sys_mod}.
The time-domain signal $\boldsymbol{r}(t)$ received by the \SU~can be expressed as
\begin{equation}\label{eqn:signaltime}\nonumber
\boldsymbol{r}(t)=\boldsymbol{h}(t)\otimes\boldsymbol{s}(t)+\boldsymbol{w}(t),
\end{equation}
where $\boldsymbol{h}(t)$ is the channel impulse between the primary transmitters and the SU, $\boldsymbol{s}(t)$ is the PUs' signal, $\boldsymbol{w}(t)$ is an additive white Gaussian noise with mean $0$ and variance $\sigma^2$, and $\otimes$ is the convolution operator.
Ideally, we should take samples with at least twice the maximum frequency, $f_{\max}$, of the signal in order to recover the signal successfully. Let the sensing window be $[0,mT_0]$ with $T_0=1/(2f_{\max})$.
Assuming a normalized number of wideband Nyquist samples per band, then the vector of the taken samples is $\boldsymbol{r}(t)=[r(0),...,r((m_0-1)T_0)]^T$ where $r(i)=r(t)|_{t=iT_0}$ and $m_0=n$.
Note that a reasonable assumption that we make is that the sensing window length is assumed to be sufficiently small when compared to the time it takes a band state to change. That is, each band's occupancy is assumed to remain constant during each sensing time window.

To reveal which bands are occupied, we perform a discrete Fourier transform of the received signal $\boldsymbol{r}(t)$; i.e.,
\begin{equation}\label{eqn:signalfrequency} \nonumber
\boldsymbol{r}_f=\boldsymbol{h}_f\boldsymbol{s}_f+\boldsymbol{w}_f=\boldsymbol{x}+\boldsymbol{w}_f,
\end{equation}
where $\boldsymbol{h}_f$, $\boldsymbol{s}_f$, and $\boldsymbol{w}_f$ are the Fourier transforms of $\boldsymbol{h}(t)$, $\boldsymbol{s}(t)$, and $\boldsymbol{w}(t)$, respectively.
\begin{figure}
\centering{
\includegraphics[width=.6\columnwidth]{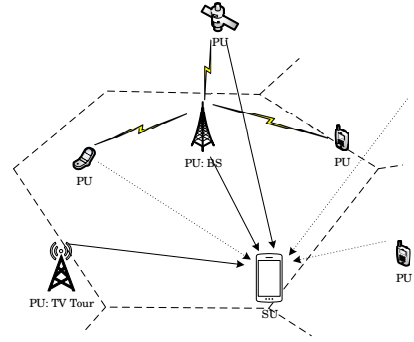}
\caption{A $SU$ performing wideband spectrum sensing. The received signals are coming from PUs with different levels of energy.}
\label{fig:sys_mod}}
\end{figure}
The vector $\boldsymbol{x}$ contains a faded version of the PUs' signals operating in the different bands. Given the occupancy of the bands by their \PU s (as illustrated in Fig.~\ref{fig:band_ocup}(b)) and in the absence of fading and interference, the vector $\boldsymbol{x}$ can be considered as {\em sparse}, where sparsity is formally defined as follows.
\begin{definition}
A vector $\boldsymbol{x}\in\mathbb{R}^n$ is \emph{k-sparse} if it has (with or without a basis change) at most $k$ non-zero elements~\cite{davenport2011introduction}; i.e., $supp(\boldsymbol{x})=\|\boldsymbol{x}\|_{\ell_0} = |\{i : x_i\neq 0\}|\leq k$. The set of $k-$sparse vectors in $\mathbb{R}^n$ are denoted by $\Sigma_{k}=\{\boldsymbol{x}\in\mathbb{R}^n : \|\boldsymbol{x}\|_{\ell_0}\leq k\}$.
\end{definition}
In practice, however, there will likely be interference coming from other nearby cells and users, and hence, $\boldsymbol{x}$ could rather be {\em nearly sparse}, formally defined as follows.
\begin{definition}
  A vector $\boldsymbol{x}\in\mathbb{R}^n$ is \emph{nearly sparse} (or also compressible~\cite{davenport2011introduction}) if most of its components obey a fast power law decay. The $k-$sparsity index of $\boldsymbol{x}$ is then defined as $\sigma_k(\boldsymbol{x},\|.\|_{\ell_p})=\displaystyle{\min_{\boldsymbol{z}\in\Sigma_k}}\|\boldsymbol{x}-\boldsymbol{z}\|_{\ell_p}$.
\end{definition}
Since wideband spectrum is large, the number of required samples can be huge, making the sensing operation prohibitively costly and the needed hardware capabilities beyond possible. To overcome this issue, compressive sampling theory has been relied on as a way to reduce the number needed measurements, given that wideband spectrum signals contain some sparsity or nearly sparsity property.
After performing the compressive sampling, the resulted signal can be written as
\begin{eqnarray}\nonumber
  \boldsymbol{y} &=& \Psi {\mathcal{F}^{-1}}(\boldsymbol{x} +\boldsymbol{w}_f) \\\nonumber
   &=& {\mathcal{A}}\boldsymbol{x}+\boldsymbol{\eta},
\end{eqnarray}
where $\boldsymbol{y}\in\mathbb{R}^m$ is the measurement vector, ${\mathcal{F}^{-1}}$ is the inverse discrete Fourier transform, and $\Psi$ is the sensing matrix assumed to have a full rank, i.e. $rank(\Psi)=m$. The sensing noise $\boldsymbol{\eta}$ is equal to $\Psi {\mathcal{F}^{-1}}\boldsymbol{w}_f$.
It is worth mentioning that from a practical viewpoint, wideband spectrum sensing requires: $i$) wideband antennas, $ii$) wideband front-end filters, and $iii$) high speed analog-to-digital converters (ADC), which are known to be very challenging to build~\cite{yoon2010ultra,hao2011highly,abari2013analog,abari2013analog,kirolos2006analog}.
Compressive sampling allows to overcome this issue by sampling at sub-Nyquist rate as illustrated by Fig.~\ref{fig:receiver_archi}.
The signal is first amplified by $m$ amplifiers and mixed with a pseudo-random waveform at a Nyquist rate ($f_s=2f_{\max}$). Then, an integrator is applied followed by an ADC that takes samples at sub-Nyquist rate ($f_s/n$).
\begin{figure}
\centering{
\includegraphics[width=1\columnwidth]{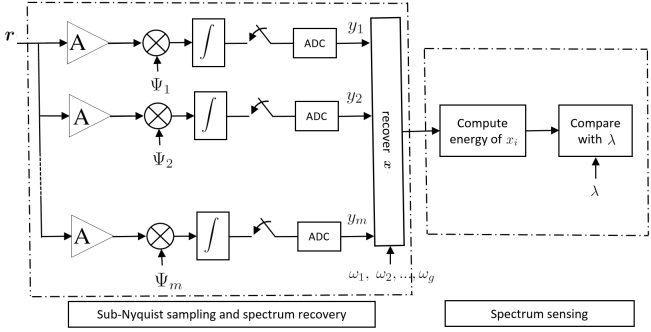}
\caption{Illustration of an \SU~receiver architecture.}
\label{fig:receiver_archi}}
\end{figure}

Different from the classical application of compressive sampling for wideband spectrum sensing, in this paper we propose to take advantage of the block-like structure of the occupancy of the wideband spectrum, and design an efficient compressive spectrum sensing algorithm well suited for heterogeneous wideband CRNs. Exploiting the variability of the average band occupancies across the various blocks has the potential for improving the recovery of the wideband spectrum sensing signals, and therefore, the ability of acquiring accurate \PU~detection and spectrum availability information efficiently.

\section{The Proposed Wideband Spectrum Sensing Information Recovery}
\label{sec:proposed}
The sensing matrix and recovery algorithm are the main challenging components in compressive sampling design. While the former consists of minimizing the number of measurements, the latter consists of ensuring a stable and robust recovery. In this work, we exploit the block-like occupancy structure information of the wideband spectrum to propose a new recovery algorithm that outperforms existing approaches by $1)$ requiring lesser numbers of measurements (better sensing matrix) and $2)$ reducing recovery error (more stable and robust recovery).
In this section, we start by providing some background on signal recovery using classical compressive sampling. Then, we present our proposed approach, and analyze its performance by bounding its achievable mean square errors and its required number of measurements.

\subsection{Background}
To acquire spectrum availability information, an SU needs first to recover the frequency-domain version of the received signal. Exploiting the fact that the signal is sparse, an ideal recovery can be performed by minimizing the $\ell_0-$norm of the signal. This, however, happens to be NP-hard~\cite{Candestao2005TIT}. It turns out that minimizing the $\ell_1-$norm recovers the sparsest solution with a bounded error that depends on the noise variance and the solution structure~\cite{candes2006stable}. This can be formulated as
\begin{equation*}
\begin{aligned}
\mathscr{P}_1 : &\; \underset{x}{\text{minimize}}
& & \|\boldsymbol{x}\|_{\ell_1}\\
& \text{subject to}
& & \|\mathcal{A}\boldsymbol{x}-\boldsymbol{y}\|_{\ell_2}\leq \epsilon
\end{aligned}
\end{equation*}
Here, $\epsilon$ is a user-defined parameter chosen such that $\|\boldsymbol{\eta}\|_{\ell_2}\leq \epsilon$. This formulation is known also as Least Absolute Shrinkage and Selection Operator (LASSO)~\cite{candes2006stable}.

Although LASSO is shown to achieve good performance when applied for wideband spectrum sensing recovery, it does not capture, nor exploit the block-like occupancy structure information that is inherent to the heterogeneous wideband spectrum, where the occupancy is homogeneous within each block but heterogenous across the different blocks of the spectrum.
As we will show later, it is the exploitation of this block-like spectrum occupancy structure that is behind the performance again achieved by our proposed compressive spectrum sensing recovery algorithm.

\subsection{The Proposed Recovery Algorithm}

Intuitively, our key idea consists of incorporating and exploiting the sparsity level variability across the different blocks of the spectrum sensing signal to perform intelligent solution search. We essentially encourage more search of the non-zero elements of the signal $\boldsymbol{x}$ in the blocks that have higher average sparsity levels while discouraging this search in the blocks with low average sparsity levels.
Such a variability in the block sparsity levels can be incorporated in the formulation through carefully designed weights. More specifically, we propose
the following weighted $\ell_1-$minimization recovery scheme:
\begin{equation*}
\begin{aligned}
\mathscr{P}_1^{\omega}: & & \underset{x}{\text{minimize}}
& & \sum_{l=1}^g \omega_l\|\boldsymbol{x}_{l}\|_{\ell_1}\\
& & \text{subject to}
& &  \|\mathcal{A}\boldsymbol{x}-\boldsymbol{y}\|_{\ell_2}\leq \epsilon.
\end{aligned}
\end{equation*}
where $\boldsymbol{x}=[\boldsymbol{x}_1^T,..., \boldsymbol{x}_g^T]^T$, $\boldsymbol{x}_l^T$ is a $n_l\times 1$ vector, and $\omega_l$ is the weight assigned to block $l$ for $l\in\{1,...,g\}$.

The question that arises here now is how to design and select these weights.
Intuitively, given that the average sparsity level differs from one block to another, blocks with higher average sparsity levels should contain more occupied bands than those blocks with lower averages. This means that if we consider two blocks with two different average sparsity levels, say $\bar{k}_1$ and $\bar{k}_2$, such that $\bar{k}_1<\bar{k}_2$, then to encourage the search for more occupied bands in the second block, the weight $\omega_2$ assigned to the second block should be smaller than the weight $\omega_1$ assigned to the first block. Following this intuition, we set the weights to be inversely proportional to the average sparsity levels. More specifically,
\begin{equation}\label{eqn:weight}
\omega_i=\frac{1/\bar{k}_i}{\sum_{j=1}^g1/\bar{k}_j}~~~~\forall~i \in \{1,...,g\}
\end{equation}

\begin{remark}\emph{Some insights into the proposed scheme}\\
Consider a two-block spectrum with $\bar{k}_1>\bar{k}_2$, and hence, with $\omega_2>\omega_1$. For this special case, the recovery algorithm can then be re-written as
\begin{equation*}
\begin{aligned}
\mathscr{P}_1^{\omega,2}: & & \underset{x}{\text{minimize}}
& & \|\boldsymbol{x}\|_{\ell_1}+(\frac{\omega_2}{\omega_1}-1)\|\boldsymbol{x}_{2}\|_{\ell_1}\\
& & \text{subject to}
& &  \|\mathcal{A}\boldsymbol{x}-\boldsymbol{y}\|_{\ell_2}\leq \epsilon.
\end{aligned}
\end{equation*}
Since we are minimizing the $\ell_1-$norm of $\boldsymbol{x}$ and the $\ell_1-$norm of $\boldsymbol{x}_2$, this can be interpreted as ensuring that the vector $\boldsymbol{x}$ is sparse while ensuring that the portion $\boldsymbol{x}_{2}$ of $\boldsymbol{x}$ is also sparse (since $\frac{\omega_2}{\omega_1}-1>0$). This means that all solutions that are sparse as a whole but somehow dense in their second portion are eliminated.
\end{remark}

\begin{remark}{\em Weights design}\\
The proposed scheme relies on the average occupancy of blocks at a per-block granularity to be able to improve the recovery accuracy of sensed signal.
From a practical viewpoint, one approach of acquiring the average occupancy (sparsity level) of each block is by monitoring the occupancy of each band within the block and averaging them over time, as already been proposed in~\cite{yilmaz2016determination,mehdawi2015spectrum}. Other machine learning based prediction approaches can also be used to provide good estimates of the average occupancy.
That is said, we also want to mention that even when the average occupancy is not determined on a per-block basis; i.e., the entire wideband spectrum is considered as one block, our proposed algorithm becomes equivalent to the classical $\ell_1$-minimization approach (LASSO) (i.e., $\mathscr{P}_1$). In other words, our algorithm performs similarly to LASSO when average block occupancices are unavailable and outperforms it otherwise.
\end{remark}
In the remaining of this section, we derive and evaluate the performance achievable by the proposed recovery algorithm by showing that it $1)$ incurs errors smaller than those incurred by existing techniques and $2)$ reduces the sensing overhead by requiring smaller numbers of required measurements.

\subsection{Mean Square Error Analysis}
The following theorem shows that our weighted recovery algorithm incurs lesser errors than what LASSO~\cite{candes2006stable} incurs.

\begin{theorem}\label{theo:error}
Letting $\boldsymbol{x}^{\sharp}$ be the optimal solution for $\mathscr{P}_1^{\omega}$, $\boldsymbol{x}^{\dag}$ the optimal solution for $\mathscr{P}_1$ and $\boldsymbol{y}=\mathcal{A}\boldsymbol{x}_0+\boldsymbol{\eta}$,
we have
\begin{displaymath}
\label{eqn:perf}
\|\boldsymbol{x}^{\sharp}-\boldsymbol{x}_0\|_{\ell_2} \leq \|\boldsymbol{x}^{\dag}-\boldsymbol{x}_0\|_{\ell_2}.
\end{displaymath}
with a probability exceeding
\begin{eqnarray}\label{eqn:vio}\nonumber
1-\sum_{i=1}^{g-1}\sum_{j=i+1}^{g}\sum_{k=1}^{\min(n_i,n_j)}\sum_{l=0}^{k-1}\dbinom{n_i}{l}q_i^l(1-q_i)^{n_i-l}\\
\times\dbinom{n_j}{k} q_{j}^k(1-q_j)^{n_j-k}
\end{eqnarray}
\end{theorem}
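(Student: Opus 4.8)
The plan is to compare the two optimization problems $\mathscr{P}_1^{\omega}$ and $\mathscr{P}_1$ by exploiting the fact that their feasible sets coincide (both require $\|\mathcal{A}\boldsymbol{x}-\boldsymbol{y}\|_{\ell_2}\le\epsilon$) and that, after normalization, the weighted objective is a reweighting of the standard $\ell_1$ objective. The key observation is that the known stability result for $\ell_1$-minimization (the LASSO error bound of~\cite{candes2006stable}, which underlies $\mathscr{P}_1$) is driven by the $\ell_1$-mass that the optimal solution places off the true support. I would first recall that bound in the form $\|\boldsymbol{x}^{\dag}-\boldsymbol{x}_0\|_{\ell_2}\le C_0\,\sigma_k(\boldsymbol{x}_0,\|\cdot\|_{\ell_1})/\sqrt{k}+C_1\epsilon$ and note that the analogous argument for the weighted problem produces an error governed by the \emph{weighted} $\ell_1$ best-$k$-term approximation error. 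Then I would argue that, because the weights $\omega_i$ defined in~\eqref{eqn:weight} are chosen to be smaller on the denser blocks (those with larger $\bar{k}_i$), the weighted sparsity index is no larger than what the unweighted analysis yields, \emph{provided} the realized sparsity ordering of the blocks agrees with the ordering of their averages $\bar{k}_i$.

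The second, and more delicate, step is to pin down exactly when that favorable ordering holds, which is the source of the probability expression in~\eqref{eqn:vio}. For a pair of blocks $i<j$, the "bad event" is that the realized number of occupied bands in the block with the smaller average exceeds that in the block with the larger average — equivalently, that the weighting steers the search in the wrong direction for that pair. Under the Bernoulli occupancy model (bands i.i.d.\ Bernoulli within a block with the block's representative parameter, which I would denote $q_i$), the number of occupied bands in block $i$ is $\mathrm{Binomial}(n_i,q_i)$, and the bad event for the pair $(i,j)$ is that $\mathrm{Binomial}(n_j,q_j)=k$ while $\mathrm{Binomial}(n_i,q_i)\le k-1$, summed over all feasible $k$. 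Writing this out gives exactly
\[
\sum_{k=1}^{\min(n_i,n_j)}\sum_{l=0}^{k-1}\binom{n_i}{l}q_i^l(1-q_i)^{n_i-l}\binom{n_j}{k}q_j^k(1-q_j)^{n_j-k},
\]
and a union bound over all $\binom{g}{2}$ pairs $i<j$ yields the claimed lower bound $1-\sum_{i=1}^{g-1}\sum_{j=i+1}^{g}(\cdots)$ on the probability that the whole block ordering is consistent, which is the event on which the error comparison is valid.

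I expect the main obstacle to be the first step rather than the combinatorics: namely, making rigorous the claim that "correct block ordering" implies the weighted best-$k$-term error is dominated by the unweighted one, and then feeding that cleanly through the cone/null-space-type argument behind the LASSO bound so that the constants line up and the inequality $\|\boldsymbol{x}^{\sharp}-\boldsymbol{x}_0\|_{\ell_2}\le\|\boldsymbol{x}^{\dag}-\boldsymbol{x}_0\|_{\ell_2}$ comes out without slack. Concretely, one has to show that any perturbation $\boldsymbol{h}=\boldsymbol{x}^{\sharp}-\boldsymbol{x}_0$ feasible for $\mathscr{P}_1^{\omega}$ satisfies a tube/cone condition at least as tight as the one satisfied by $\boldsymbol{x}^{\dag}-\boldsymbol{x}_0$; this uses the optimality of $\boldsymbol{x}^{\sharp}$ for the weighted objective together with the sign pattern of $\omega_j/\omega_i-1$ as exhibited in the two-block Remark, decomposed block-by-block. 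The remaining manipulations — triangle inequalities, splitting the support block-wise, and invoking the restricted isometry / stable recovery guarantee for $\mathcal{A}$ that $\mathscr{P}_1$ already relies on — are routine once the dominance of the weighted approximation error is established on the good event, and the probabilistic bookkeeping is precisely the union bound described above.
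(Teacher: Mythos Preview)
Your second step---identifying the pairwise ``bad event'' that block $j$ realizes more occupied bands than block $i$ (with $i<j$ in the average ordering), writing it as a double sum over the two binomial pmfs, and then union-bounding over all $\binom{g}{2}$ pairs---is exactly what the paper does, and produces~\eqref{eqn:vio} in the same way.

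For the first step the paper takes a different route. Rather than passing through the stability-bound machinery and comparing weighted versus unweighted best-$k$-term approximation errors, it argues \emph{geometrically} and \emph{incrementally}: assuming $\omega_1\le\cdots\le\omega_g$, it introduces the weights one block at a time via the intermediate problems $\mathscr{P}_1^{\omega_1,\ldots,\omega_i,1}$ in which only the first $i$ blocks carry their weights and the rest have weight $1$. At each stage the claim is that placing $\omega_l<1$ on a block known to be dense ``pinches'' the weighted $\ell_1$ ball toward the coordinate axes supporting that block, so that the intersection of this ball with the affine feasible set $\boldsymbol{x}_0+\mathcal{N}\mathrm{ull}(\mathcal{A})$ lies closer to $\boldsymbol{x}_0$ than the previous (less-weighted) ball's intersection did. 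Iterating to $i=g$ yields the comparison with $\mathscr{P}_1$.

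Your analytic plan via cone/null-space conditions is in principle more rigorous than the paper's pinching picture, which is essentially a heuristic. But you have correctly put your finger on the real obstacle: showing that the weighted stability \emph{bound} is no larger than the unweighted one only compares upper bounds, not the actual errors, and does not by itself deliver $\|\boldsymbol{x}^{\sharp}-\boldsymbol{x}_0\|_{\ell_2}\le\|\boldsymbol{x}^{\dag}-\boldsymbol{x}_0\|_{\ell_2}$ pointwise. The paper's geometric argument works directly with the minimizers rather than with bounds, which is the right level to aim for, but its ``pinching'' step is asserted rather than proved. In short: your combinatorics match the paper; your first step is a genuinely different (more analytic) approach; and both routes leave the same gap---turning the intuition that correct block ordering tightens the recovery into a rigorous inequality between the two optimizers---unclosed.
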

assuming $n_1q_1\geq...\geq n_gq_g$.
\begin{proof}
The proof is provided in Appendix~\ref{idx:prooftheoerror}.
\end{proof}
The theorem says that the solution to the proposed $\mathscr{P}_1^{\omega}$ is at least as good as the solution to $\mathscr{P}_1$. Also as done by design, the more heterogeneous the wideband spectrum is, the higher the error gap between our proposed algorithm and LASSO is.
This is because the searched solution has the right required structure captured via the assigned weights.

Now, we assess the stability and robustness of the proposed recovery scheme, defined as follows.
\begin{definition}{\em Stable and Robust Recovery~\cite{candes2006stable}}\\
For $\boldsymbol{y}=\mathcal{A}\boldsymbol{x}+\boldsymbol{w}$ such that $\|\boldsymbol{w}\|_{\ell_2}\leq \epsilon$, a recovery algorithm, $\Delta$, and a sensing matrix, $\mathcal{A}$, are said to achieve a stable and robust recovery if there exist $C_0$ and $C_1$ such that
  \begin{equation}\label{eqn:recover}\nonumber
  \|\Delta \boldsymbol{y}-\boldsymbol{x}\|_{\ell_2}\leq C_0 \epsilon+C_1 \frac{\sigma_k(\boldsymbol{x},\|.\|_{\ell_p})}{\sqrt{k}}.
  \end{equation}
\end{definition}
Note that the stability implies that small perturbations of the observation lead to a small perturbation of the recovered signal. Robustness, on the other hand, is relative to noise; for instance, if the measurement vector is corrupted by noise with a bounded energy, then the error is also bounded~\cite{candes2006stable}.
We now state the following result, which follows directly from Theorem~\ref{eqn:perf}.

\begin{proposition}\label{prop:1}
Our proposed algorithm, $\mathscr{P}_1^{\omega}$, achieves a stable and robust recovery.
\end{proposition}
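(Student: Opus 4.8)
The plan is to obtain Proposition~\ref{prop:1} as an immediate corollary of Theorem~\ref{theo:error} together with the classical stability/robustness guarantee for LASSO established in~\cite{candes2006stable}. First I would recall that, under the standard hypothesis on the sensing matrix $\mathcal{A}$ (a restricted isometry property of the appropriate order, which is exactly what makes $\mathscr{P}_1$ amenable to the analysis in~\cite{candes2006stable}), the LASSO optimum $\boldsymbol{x}^{\dag}$ obeys
\[
\|\boldsymbol{x}^{\dag}-\boldsymbol{x}_0\|_{\ell_2}\;\leq\; C_0\,\epsilon\;+\;C_1\,\frac{\sigma_k(\boldsymbol{x}_0,\|.\|_{\ell_p})}{\sqrt{k}}
\]
for constants $C_0,C_1$ depending only on the RIP constant, with $k$ taken to be the total sparsity of $\boldsymbol{x}_0$ (the sum of the per-block sparsity levels). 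This is precisely the stable-and-robust-recovery inequality of the preceding definition applied to the decoder $\Delta=\mathscr{P}_1$.

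Next I would invoke Theorem~\ref{theo:error}, which gives $\|\boldsymbol{x}^{\sharp}-\boldsymbol{x}_0\|_{\ell_2}\leq\|\boldsymbol{x}^{\dag}-\boldsymbol{x}_0\|_{\ell_2}$. Composing the two inequalities yields
\[
\|\boldsymbol{x}^{\sharp}-\boldsymbol{x}_0\|_{\ell_2}\;\leq\; C_0\,\epsilon\;+\;C_1\,\frac{\sigma_k(\boldsymbol{x}_0,\|.\|_{\ell_p})}{\sqrt{k}},
\]
which is exactly the stable-and-robust-recovery property for $\Delta=\mathscr{P}_1^{\omega}$ with the same pair $(C_0,C_1)$; in fact the chain has slack at its first inequality, so the weighted scheme is at least as stable and robust as LASSO. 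That closes the argument, with no computation required beyond the two cited facts.

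The only real subtlety — and the step I would be most careful with — is the status of the probabilistic qualifier. Theorem~\ref{theo:error} holds only on the event of probability at least the quantity in~(\ref{eqn:vio}) (the randomness being that of the block occupancies), so Proposition~\ref{prop:1} should strictly speaking inherit that same qualifier, and I would state it explicitly; separately, the RIP hypothesis on $\mathcal{A}$ itself holds only with high probability for the random sensing matrices realized by the receiver architecture of Fig.~\ref{fig:receiver_archi}, so the two events must be intersected, at the cost of at most a union bound. A minor bookkeeping point is to confirm that the $k$ and the norm $\|.\|_{\ell_p}$ in the invoked LASSO bound are the ones matching $\boldsymbol{x}_0$'s aggregate (cross-block) sparsity structure; once that identification is made, the proposition follows directly.
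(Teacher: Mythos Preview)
Your approach is correct and essentially identical to the paper's: the paper's proof simply chains Theorem~\ref{theo:error} with \cite[Theorem~2]{candes2006stable} to obtain $\|\boldsymbol{x}^{\sharp}-\boldsymbol{x}_0\|_{\ell_2}\leq\|\boldsymbol{x}^{\dag}-\boldsymbol{x}_0\|_{\ell_2}\leq C_0\epsilon+C_1\sigma_k(\boldsymbol{x}_0,\|.\|_{\ell_1})/\sqrt{k}$, noting (as you do) that the bound holds with probability exceeding~\eqref{eqn:vio}. Your additional care about intersecting with the RIP event and matching the norm index is welcome but goes slightly beyond what the paper records.
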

\begin{proof}
The proof is provided in Appendix~\ref{idx:proofprop1}.
\end{proof}

The proposition gives a bound on the incurred error by means of two quantities. The first is an error of the order of the noise variance while the second is of the order of the sparsity index of $\boldsymbol{x}$.

\begin{remark}\emph{Effect of time-variability}\\
We want to iterate that our proposed algorithm is guaranteed to outperform existing approaches on the average, and not on a per-sensing step basis.
This is because although the performance improvement achieved by our technique stems from the fact that blocks with higher average sparsity levels are given lower weights---which is true on the average, it is not unlikely that, at some sensing step, the actual sparsity level of a block with a higher average could be smaller than that of a block with a lower average. When this happens, our algorithm won't be guaranteed to achieve the best performance during that specific sensing step. The good news is that first what matters is the average over longer periods of sensing time, and second,  depending on the gap between the block sparsity averages, this scenario happens with very low probability.

To illustrate, let us assume that the wideband spectrum contains two blocks with average sparsity  $\bar{k}_1=\sum_{j\in \mathcal{G}_1}p_j\approx n_1p_1$ and $\bar{k}_2=\sum_{j\in \mathcal{G}_2}p_j\approx n_2p_2$ with $\bar{k}_2<\bar{k}_1$, where again $|\mathcal{G}_1|=n_1$ and $|\mathcal{G}_2|=n_2$. Here, the occupancy probabilities of all bands in each of these two blocks are assumed to be close to one another. Our approach encourages to find more occupied bands in the first block than in the second block. However, since band occupancy is time varying, then at some given time we may have a lesser number of non-zero components in first block than in the second. This unlikely event, in this scenario, happens with probability
\begin{displaymath}
\sum_{k=1}^{\min(n_1,n_2)}\sum_{l=0}^{k-1}\dbinom{n_1}{l}q_1^l(1-q_1)^{n_1-l}\dbinom{n_2}{k} q_{2}^k(1-q_2)^{n_2-k}
\end{displaymath}
For a sufficiently different average sparsity levels (e.g. having $\bar{k}_1>2\bar{k}_2$), this probability is smaller than $0.02$.
Finally, it is worth mentioning that our proposed scheme can achieve further performance improvement by adopting advanced estimation approaches, such as those that are based on machine learning~\cite{wang2012sparsity}. However, this additional performance improvement comes at the price of additional computational complexity that is accompanied with these estimators.
\end{remark}

Having investigated the design of the recovery algorithm, now we turn our attention to the design of the sensing matrix. The number of measurements, $m$, that needs to be taken determines the size of the sensing matrix and hence the sensing overhead of the recovery approach. Therefore, we aim to exploit the structure of the solution to reduce the required number of measurements as much as possible, so that the sensing overhead is reduced as much as possible.

\subsection{Number of Required Measurements}
The sensing matrix is usually designed with two major design criteria/goals in mind: reducing the number of measurements and satisfying the RIP property, defined as follows.
\begin{definition}\label{def:1}\emph{Restricted Isometry Property (RIP)}~\cite{davenport2011introduction}\\
A matrix $\mathcal{A}$ is said to satisfy the RIP of order $k$ if there exists $\delta_{k}\in(0,1)$ such that for $\boldsymbol{x}\in\Sigma_{k}$
\begin{equation}\label{eqn:RIP}\nonumber
(1-\delta_k)\|\boldsymbol{x}\|_{\ell_2}^2\leq \|\mathcal{A}\boldsymbol{x}\|_{\ell_2}^2\leq (1+\delta_k)\|\boldsymbol{x}\|_{\ell_2}^2.
\end{equation}
\end{definition}
Broadly speaking, the RIP ensures that every $k$ columns of $\mathcal{A}$ are nearly orthogonal.
We now present one of our main results derived in this paper, which provides a lower bound on the number of required measurements.
\begin{theorem}\label{theo:mea}
Let $\mathcal{A}=[\mathcal{A}_1...\mathcal{A}_g]$ be the sensing matrix such that $\mathcal{A}_i$ satisfies the RIP of order $2\bar{k}_i$ with $\{\delta_{2\bar{k}_1},...,\delta_{2\bar{k}_g}\}\in(0,1/2]$. Then, the number of measurements $m$ must satisfy
\begin{equation}\label{eqn:m}\nonumber
  m\geq \frac{1}{2\log\Big(\frac{\sum_{i=1}^g \sqrt{2\bar{k}_i(1+\delta_{\bar{k}_i})}+\max_i(\sqrt{\bar{k}_i(1-\delta_{\bar{k}_i})/8})}{\min_i(\sqrt{\bar{k}_i(1-\delta_{\bar{k}_i})/8})}\Big)}\bar{k}\log\Big(\frac{n}{\bar{k}}\Big)
\end{equation}
\end{theorem}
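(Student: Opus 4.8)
The statement is a converse---a necessary condition on the number of measurements $m$---so the plan is to use a volumetric packing argument, the same ball-counting technique that yields the classical $m=\Omega(k\log(n/k))$ lower bound for RIP matrices (see the compressive-sensing foundations cited in~\cite{davenport2011introduction}), now adapted to the per-block RIP hypothesis. Concretely, I would exhibit a combinatorially large family of block-sparse signals whose images under $\mathcal{A}$ are mutually well separated in $\ell_2$, and then force that family into a ball in $\mathbb{R}^m$ via a volume comparison: counting one way gives a lower bound on the cardinality, counting the other way gives an upper bound that grows like $C^m$, and matching the two produces the stated inequality on $m$.

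For the construction, for each block $i$ I would invoke a Gilbert--Varshamov / Hamming-bound combinatorial lemma to build a set $X_i\subset\Sigma_{\bar k_i}\subset\mathbb{R}^{n_i}$ of $\bar k_i$-sparse vectors, all of a common $\ell_2$ norm proportional to $\sqrt{\bar k_i}$, pairwise $\ell_2$-separated by a fixed fraction of that norm, and with $\log|X_i|\ge\frac{\bar k_i}{2}\log(n_i/\bar k_i)$; the precise normalization and minimum Hamming separation chosen here are exactly what pin down the numerical constants (the factor $2$, the $1/8$) in the final bound. I would then assemble these into the product family $X=\{\,[\boldsymbol{x}_1^T,\dots,\boldsymbol{x}_g^T]^T:\boldsymbol{x}_i\in X_i\,\}$ of block-sparse vectors in $\mathbb{R}^n$, so that $|X|=\prod_i|X_i|$ and $\log|X|\ge\frac12\sum_i\bar k_i\log(n_i/\bar k_i)$---the aggregate of the per-block counts, which plays the role of $\bar k\log(n/\bar k)$ and collapses to it in the homogeneous single-block case.

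For the cardinality upper bound, for any $\boldsymbol{x}\in X$ I would write $\mathcal{A}\boldsymbol{x}=\sum_i\mathcal{A}_i\boldsymbol{x}_i$ and bound $\|\mathcal{A}\boldsymbol{x}\|_{\ell_2}\le\sum_i\|\mathcal{A}_i\boldsymbol{x}_i\|_{\ell_2}\le\sum_i\sqrt{2\bar k_i(1+\delta_{\bar k_i})}=:R$ using the triangle inequality and the RIP upper bound applied to each $\boldsymbol{x}_i\in\Sigma_{\bar k_i}$, so every image sits inside a ball of radius $R$ (up to the usual constant incurred by not re-centering at the origin). For the separation, if $\boldsymbol{x}\neq\boldsymbol{z}$ in $X$ they disagree in some block $i$, where $\boldsymbol{x}_i-\boldsymbol{z}_i\in\Sigma_{2\bar k_i}$, so the RIP lower bound for $\mathcal{A}_i$ together with the per-block separation gives $\|\mathcal{A}(\boldsymbol{x}-\boldsymbol{z})\|_{\ell_2}\ge 2r_0$ with $r_0:=\min_i\sqrt{\bar k_i(1-\delta_{\bar k_i})/8}$, the minimum reflecting the worst block in which a pair can differ. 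Consequently the open $\ell_2$-balls $B(\mathcal{A}\boldsymbol{x},r_0)$ for $\boldsymbol{x}\in X$ are pairwise disjoint and all lie inside the ball of radius $R+r_0$; since $\mathrm{rank}(\Psi)=m$ these live in an $m$-dimensional space, so comparing Lebesgue volumes gives $|X|\,r_0^{\,m}\le(R+r_0)^m$, i.e. $\log|X|\le m\log\!\big((R+r_0)/r_0\big)$. Chaining this with the lower bound on $\log|X|$ from the construction and substituting $R$ and $r_0$ yields the claimed inequality for $m$.

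The main obstacle is the block-RIP bookkeeping in the image bounds: since RIP is assumed only for each $\mathcal{A}_i$ (order $2\bar k_i$) and not for $\mathcal{A}$ as a whole, the cross-block terms have to be absorbed through the triangle inequality without merging the individual $\delta_{\bar k_i}$, and the case analysis over which block a given pair of packing points differs in is precisely what produces the $\max_i/\min_i$ structure appearing in the final ratio. A secondary delicacy is reconciling the aggregate per-block count $\sum_i\bar k_i\log(n_i/\bar k_i)$ with the $\bar k\log(n/\bar k)$ written in the theorem; this is an equality in the homogeneous case and otherwise should be read as the natural block-wise refinement of the classical count. Beyond those two points, everything reduces to the routine volume/covering estimate.
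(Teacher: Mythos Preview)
Your overall strategy---a Gilbert--Varshamov packing of block-sparse signals, per-block RIP to control image norms, then a volume comparison in $\mathbb{R}^m$---is exactly the route the paper takes. Two points of departure, however, matter for matching the stated bound.

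First, the packing set. You build $X$ as a Cartesian product $\prod_iX_i$, which yields $\log|X|\ge\tfrac12\sum_i\bar k_i\log(n_i/\bar k_i)$ (as you note, not literally $\tfrac{\bar k}{2}\log(n/\bar k)$). The paper instead invokes a block version of the standard packing lemma (its Lemma~2) producing a set $X\subset\Sigma_{\bar k}$ with $\log|X|\ge\tfrac{\bar k}{2}\log(n/\bar k)$ \emph{and} with the stronger separation property that \emph{every} block satisfies $\|x_i-y_i\|_{\ell_2}\ge\sqrt{\bar k_i/2}$ whenever $x\neq y$. With a product family this last property fails: two elements can agree on all but one block. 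The paper's construction is what puts the $\max_i$ (not your $r_0=\min_i$) in the numerator of the ratio, since the containing ball must have radius $d^{\max}+\max_i\tau_i$ once per-block balls of varying radii $\tau_i$ are in play.

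Second, your separation step has a gap. From ``$x$ and $z$ disagree in some block $i$'' you conclude $\|\mathcal{A}(x-z)\|_{\ell_2}\ge2r_0$, but $\mathcal{A}(x-z)=\sum_j\mathcal{A}_j(x_j-z_j)$, and if the pair differs in several blocks the cross-block contributions can cancel; the per-block RIP on $\mathcal{A}_i$ alone does not lower-bound the norm of the full sum. The paper sidesteps this by working with the per-block images $\mathcal{A}_ix_i$ directly (balls of radius $\tau_i$ around those points are disjoint by the RIP lower bound and Lemma~2), rather than with balls around the aggregated images $\mathcal{A}x$. If you keep your product construction you will need either an orthogonality assumption across the $\mathcal{A}_i$ or a different counting that avoids lower-bounding $\|\mathcal{A}(x-z)\|_{\ell_2}$ globally.
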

\begin{proof}
The proof is provided in Appendix~\ref{idx:prooftheomea}.
\end{proof}
Theorem~\ref{theo:mea} given above provides a lower bound on the required number of measurements needed to recover the signal. As shown later in the result section, this bound is tighter than existing approaches in that with the same number of measurements, our proposed framework can recover signals with better accuracy than those obtained via existing approaches. Alternatively, we can also say that our framework can recover signals with an accuracy equal to those obtained with existing approaches, but while requiring lesser numbers of measurements, $m$.
The derived lower bound exhibits an asymptotic behavior similar to that of the classic bound (i.e., $\mathcal{O}(\bar{k} \log(n/\bar{k}))$), but with a smaller constant. By setting $g=1$, we get the bound provided in~\cite[Theorem 1.4]{davenport2011introduction}. So our derived bound could be viewed as a generalization of that of~\cite{davenport2011introduction}, in that it
applies to wideband spectrum with heterogeneous block occupancies; setting $g=1$ corresponds to the special case of the homogeneous wideband spectrum.

Existing approaches determine the required number of measurements by setting the sparsity level to the average number of occupied bands (e.g., $m\geq \bar{k} \log(n/\bar{k})$). However, in wideband spectrum sensing, the number of occupied bands changes over time, and can easily exceed the average number. Every time this happens, it leads to an inaccurate signal recovery (it yields a solution with high error).
To address this issue, in our proposed framework, we do not base the selection of the number of measurements on the average sparsity. Instead, the sparsity level that we use in Theorem~\ref{theo:mea} to determine $m$ is chosen in such a way that the likelihood that the number of occupied bands exceeds that number is small. The analysis needed to help us determine such a sparsity level is provided in the next section.

\subsection{PU Traffic Characterization}
Based on the model of occupancy of the wideband provided in the system model, the following lemma gives the probability mass distribution of the number of occupied bands.
\begin{lemma}
The number of occupied bands across the entire wideband has the following probability mass function
\begin{equation}\label{eqn:bingen}\nonumber
  \textrm{Pr}(X=k) = \displaystyle{\sum_{\Lambda\in\mathcal{S}_k}}\Big[ \displaystyle{\prod_{i\in\Lambda}}~p_i\Big]\Big[     \displaystyle{\prod_{j\in\Lambda^c}}(1-p_j)\Big]
\end{equation}
where $\mathcal{S}_k=\{\Lambda:~\Lambda\subseteq \{1,...,n\}, |\Lambda|=k\}$, and $\Lambda^c$ is the complementary set of $\Lambda$.
\end{lemma}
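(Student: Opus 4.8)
The plan is to identify the number of occupied bands with the random variable $X=\sum_{i=1}^{n}\mathcal{H}_i$, i.e., the sum of the independent Bernoulli indicators $\mathcal{H}_i\sim\textrm{Bernoulli}(p_i)$ introduced in the occupancy model, so that $X$ follows a Poisson--binomial law; the statement is then obtained by a direct enumeration of the elementary outcomes.

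First I would decompose the event $\{X=k\}$ by conditioning on exactly which bands are occupied. For each index set $\Lambda\in\mathcal{S}_k$, let $E_\Lambda$ denote the event that every band in $\Lambda$ is occupied and every band in $\Lambda^c$ is vacant. A realization of $(\mathcal{H}_1,\dots,\mathcal{H}_n)$ has exactly $k$ entries equal to one precisely when its support is one of the size-$k$ sets in $\mathcal{S}_k$; hence $\{X=k\}=\bigcup_{\Lambda\in\mathcal{S}_k}E_\Lambda$, and this union is disjoint since a realization has a unique support. Next I would invoke the mutual independence of the $\mathcal{H}_i$ over all $n$ bands to factor $\textrm{Pr}(E_\Lambda)=\prod_{i\in\Lambda}\textrm{Pr}(\mathcal{H}_i=1)\prod_{j\in\Lambda^c}\textrm{Pr}(\mathcal{H}_j=0)=\prod_{i\in\Lambda}p_i\prod_{j\in\Lambda^c}(1-p_j)$, and sum over $\Lambda$ to obtain $\textrm{Pr}(X=k)=\sum_{\Lambda\in\mathcal{S}_k}\textrm{Pr}(E_\Lambda)$, which is precisely the claimed formula.

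There is no genuinely hard step here: this is the textbook Poisson--binomial PMF, and the only thing requiring care is the bookkeeping that the events $E_\Lambda$, $\Lambda\in\mathcal{S}_k$, are pairwise disjoint and together exhaust $\{X=k\}$, together with the observation that independence must be applied to the full collection $\{\mathcal{H}_i\}_{i=1}^{n}$, not merely within a single block $\mathcal{G}_j$ as literally stated in the model. As sanity checks I would verify that $\sum_{k=0}^{n}\textrm{Pr}(X=k)=\prod_{i=1}^{n}\bigl(p_i+(1-p_i)\bigr)=1$ and that the homogeneous case $p_i\equiv p$ collapses the sum to $\binom{n}{k}p^{k}(1-p)^{n-k}$, recovering the ordinary binomial PMF.
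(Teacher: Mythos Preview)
Your proposal is correct and follows essentially the same approach as the paper: fix a support $\Lambda$ of size $k$, use independence to write the probability of that exact occupancy pattern as $\prod_{i\in\Lambda}p_i\prod_{j\in\Lambda^c}(1-p_j)$, and then sum over all such supports. Your write-up is more explicit about the disjointness of the events $E_\Lambda$ and about the need for independence across all $n$ bands (not merely within each block), but the underlying argument is identical.
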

\begin{proof}
Let $\Lambda$ the support such that its $i^{th}$ component is equal to one when there is a PU using the $i^{th}$ band. Then, the probability that there is exactly $k$ occupied bands is $\Big[  \displaystyle{\prod_{i\in\Lambda}}~p_i\Big]\Big[\displaystyle{\prod_{j\in\Lambda^c}}(1-p_j)\Big]$ such that $| \Lambda|=k$. Now, considering all the supports with a cardinality $k$ gives the expression of the mass distribution.
\end{proof}

Given this distribution, the average number of occupied bands across the entire wideband spectrum is $\bar{p}=\sum_{i=1}^np_i$. As just mentioned earlier, setting the sparsity level to be fixed to the average $\lfloor\bar{p}\rfloor$ will lead to inaccurate signal recovery, since the likelihood that the number of occupied bands exceeds this sparsity level is not negligible.
In the following theorem, we provide a lower bound on the probability that the number of occupied bands is below an arbitrary sparsity level.

\begin{theorem}\label{theo:bound}
  The probability that the number of occupied bands is below a sparsity level $k_0$ is lower-bounded by
  \begin{eqnarray}\label{eqn:bound}\nonumber
    \textrm{Pr}(X\leq k_0) &=&\sum_{k=0}^{k_0}\displaystyle{\sum_{\Lambda\in\mathcal{S}_k}}\Big[     \displaystyle{\prod_{i\in\Lambda}}~p_i\Big]\Big[     \displaystyle{\prod_{j\in\Lambda^c}}(1-p_j)\Big]\\
    &\geq& 1-\frac{e^{k_0-\sum_{i}^np_i}}{(k_0/\sum_{i}^np_i)^{k_0}}
  \end{eqnarray}
\end{theorem}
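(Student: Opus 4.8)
The plan is to recognize the right-hand side as a Chernoff-type tail bound for a sum of independent (non-identically distributed) Bernoulli random variables. Write $X = \sum_{i=1}^n \mathcal{H}_i$ where $\mathcal{H}_i \sim \text{Bernoulli}(p_i)$ independently, so that $\mathbb{E}[X] = \sum_{i=1}^n p_i =: \mu$. The event $\{X \le k_0\}$ is exactly a lower-tail event, and I would bound its complement $\Pr(X > k_0)$ (equivalently $\Pr(X \ge k_0)$ after handling integrality) via the exponential Markov inequality. The target expression $e^{k_0 - \mu}/(k_0/\mu)^{k_0}$ is precisely the classical multiplicative Chernoff bound $\big(e^{\delta}/(1+\delta)^{1+\delta}\big)^{\mu}$ evaluated at $1+\delta = k_0/\mu$, so the whole proof is an instance of that standard argument.

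The key steps, in order, are as follows. First, for any $t>0$ apply Markov's inequality to $e^{tX}$: $\Pr(X \ge k_0) \le e^{-t k_0}\,\mathbb{E}[e^{tX}]$. Second, use independence to factor the moment generating function, $\mathbb{E}[e^{tX}] = \prod_{i=1}^n \mathbb{E}[e^{t\mathcal{H}_i}] = \prod_{i=1}^n\big(1 - p_i + p_i e^{t}\big)$. Third, apply the elementary inequality $1 + u \le e^{u}$ with $u = p_i(e^t - 1)$ to each factor, yielding $\mathbb{E}[e^{tX}] \le \exp\big((e^t - 1)\sum_{i=1}^n p_i\big) = e^{\mu(e^t-1)}$. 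Fourth, combine to get $\Pr(X \ge k_0) \le e^{-t k_0 + \mu(e^t - 1)}$ for every $t>0$, then optimize over $t$: the exponent is minimized at $e^{t} = k_0/\mu$ (i.e. $t = \log(k_0/\mu)$, which is positive precisely when $k_0 > \mu$), giving $\Pr(X \ge k_0) \le e^{\mu - k_0}\,(k_0/\mu)^{-k_0} \cdot (k_0/\mu)^{0}$—more precisely $\Pr(X \ge k_0) \le \dfrac{e^{\,k_0 - \mu}}{(k_0/\mu)^{k_0}}$ after simplifying the exponent $-k_0\log(k_0/\mu) + \mu(k_0/\mu - 1) = (k_0 - \mu) - k_0\log(k_0/\mu)$. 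Fifth, conclude $\Pr(X \le k_0) = 1 - \Pr(X > k_0) \ge 1 - \Pr(X \ge k_0) \ge 1 - e^{k_0 - \mu}/(k_0/\mu)^{k_0}$, which is the claimed bound with $\mu = \sum_{i=1}^n p_i$. The equality in the displayed statement is just the Lemma's mass function summed over $k = 0, \dots, k_0$.

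I do not expect a serious obstacle here; the argument is textbook. The only points needing a little care are: (i) the bound is only meaningful (and the optimizing $t$ only positive) when $k_0 > \mu = \sum_i p_i$, which is the regime of interest since one chooses $k_0$ above the average occupancy—I would state this assumption explicitly; and (ii) the mild integrality gap between $\Pr(X > k_0)$ and $\Pr(X \ge k_0)$, which only makes the stated bound weaker, so it is harmless. If one wants the cleaner form $\big(e^{\delta}/(1+\delta)^{1+\delta}\big)^{\mu}$ first and then the stated simplification, that is an equivalent route. I would present the MGF-factorization step as the heart of the proof, since that is where independence of the band occupancies is used.
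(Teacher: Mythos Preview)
Your proposal is correct and follows essentially the same route as the paper: apply the exponential Markov (Chernoff) inequality to $X=\sum_i \mathcal{H}_i$, factor the MGF by independence, bound each factor via $1+u\le e^{u}$, optimize at $t=\log(k_0/\mu)$, and take complements. The paper's proof is identical step for step (with the same optimizing $t^*$), though it does not explicitly flag the hypothesis $k_0>\sum_i p_i$ or the $\Pr(X>k_0)$ vs.\ $\Pr(X\ge k_0)$ distinction that you rightly note.
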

\begin{proof}
The proof is provided in Appendix~\ref{idx:prooftheobound}.
\end{proof}
Since the sparsity level is a time-varying process, this theorem gives a probabilistic bound on how to choose a sparsity level such that the level will be exceeded only with a certain probability. Now depending on the allowed fraction, $\alpha$, of instances in which the actual number of occupied bands exceeds the sparsity level, Theorem~\ref{theo:bound} can be used to determine the sparsity level, $k_0$, that can be used in Theorem~\ref{theo:mea} to determine the required number of measurements, $m$.
In other words, $\alpha$ is the probability that the actual number of occupied bands is above the defined sparsity level $k_0$.
If $\alpha$ is set to $5\%$, then it means that only about $5\%$ of the time the actual number of occupied bands exceeds $k_0$. As expected, there is a clear tradeoff between $\alpha$ and $k_0$. Smaller values of $\alpha$ requires higher values of $k_0$, and vice-versa. In our numerical evaluations given in the next section, $\alpha$ is set to $4\%$.

\section{Numerical Evaluation}
\label{sec:numerical_results}
In this section, we evaluate our proposed wideband spectrum sensing approach and we compare its performance to the state-of-the-art approaches.
Consider a primary system operating over a wideband consisting of $n=256$ bands.
We assume that the wideband contains $g=4$ blocks with equal sizes. The average probabilities of occupancy in each block are as follows: $\bar{k}_1=0.1\times 64$, $\bar{k}_2=0.01\times 64$, $\bar{k}_3=0.1\times 64$, $\bar{k}_4=0.01\times 64$.
To model the signals coming from the active users, we generate them in the frequency domain with random magnitudes (which captures the effect of the different channel SNRs that every operating PU has with the SU).
At the SU side, the sensing matrix $\Psi$ is generated according to a Bernoulli distribution with zero mean and $1/m$ variance. We opted for a sub-Gaussian distribution since it guarantees the RIP with high probability~\cite{davenport2011introduction}. Here, the number of measurements is generated first according to $m=\mathcal{O}(k_0\log(n/k_0))$.

We fix $k_0$ to $25$ which according to Theorem~\ref{theo:bound} is satisfied with a probability that exceeds $0.96$ (see Fig. \ref{fig:sparsity}). Now assuming an RIP constant $\delta_{2k_i}\leq1/2$ and replacing $k_0$ and the RIP constant with their values in Theorem 3 yields that the number of measurements should be at least $29$.
\begin{figure}
\centering{
\includegraphics[width=1\columnwidth]{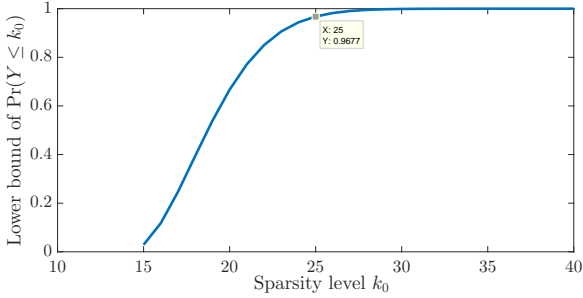}
\caption{Lower bound of $\textrm{Pr}(X<k_0)$ as a function of the sparsity level $k_0$. }
\label{fig:sparsity}}
\end{figure}
We use CVX for the solving of the optimization problem~\cite{grant2008cvx}.

A first performance that we look at is the mean square error $\|\boldsymbol{x}^{\sharp}-\boldsymbol{x}_0\|_{\ell_2}$ as a function of the sensing SNR defined as $\textrm{SNR}=\frac{\|\mathcal{A}\boldsymbol{x}\|_{\ell_2}^2}{\|\boldsymbol{\eta}\|_{\ell_2}^2}$,
where $\|\mathcal{A}\boldsymbol{x}\|_{\ell_2}^2=(\mathcal{A}\boldsymbol{x})^T\mathcal{A}\boldsymbol{x}$ and $\|\boldsymbol{\eta}\|_{\ell_2}^2=\boldsymbol{\eta}^T\boldsymbol{\eta}$.
In Fig.~\ref{fig:perf1}, we compare our proposed technique to the existing approaches. Compared to LASSO~\cite{candes2006stable}, CoSaMP~\cite{needell2009cosamp}, and (OMP)~\cite{tropp2007signal}, our proposed approach achieves a lesser error when fixing the number of measurement $m$ to $27$. This is because we account for the average sparsity levels in each block, thereby favoring the search on the first and third block rather than the two others. Also, observe that as the sensing SNR gets better, not only does the error of the proposed technique decrease, but also the error gap between our technique and that of the other ones increases. This is because the noise effect becomes limited. Furthermore, OMP has the worst performance as it requires a higher number of measurements to perform well. In Fig.~\ref{fig:perfSNR}, we look at the performance of the recovery scheme as a function of the average received SNR defined as the ratio between the received signal power and the noise power; i.e., $\|\boldsymbol{x}\|_{\ell_0}^2/\|\boldsymbol{\eta}\|_{\ell_2}^2$. We observe a similar behavior as in Fig.~\ref{fig:perf1}.

\begin{figure}
\centering{
\includegraphics[width=1\columnwidth]{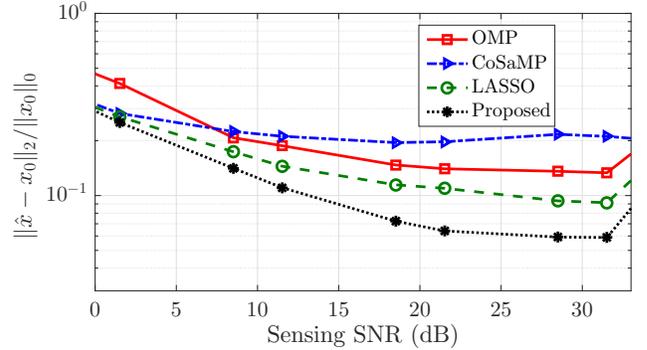}
\caption{Comparison between the recovery approaches in terms of mean square error as a function of the sensing SNR ($m=27$).}
\label{fig:perf1}}
\end{figure}
\begin{figure}
\centering{
\includegraphics[width=1\columnwidth]{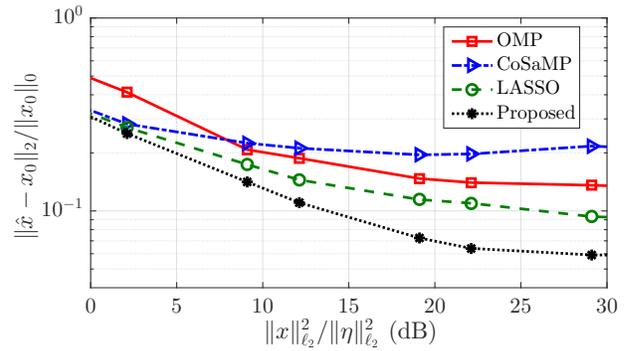}
\caption{Comparison between the recovery approaches in terms of mean square error as a function of received signal SNR ($m=27$).}
\label{fig:perfSNR}}
\end{figure}

In Fig.~\ref{fig:perf2}, we investigate the error percentage gain (EPG) achieved by our technique when compared to the other schemes under various different numbers of measurements. We define the error gain of our approach over an existing approach $i$ as
\begin{equation}\nonumber
\text{EPG} (\%)=\frac{\|\boldsymbol{x}_{i}^{\sharp}-\boldsymbol{x}_0\|_{\ell_2}-\|\boldsymbol{x}_{\textrm{Proposed}}^{\sharp}-\boldsymbol{x}_0\|_{\ell_2}}{\|\boldsymbol{x}_{i}^{\sharp}-\boldsymbol{x}_0\|_{\ell_2}}100\%
\end{equation}
 \begin{figure}
\centering{
\includegraphics[width=1\columnwidth]{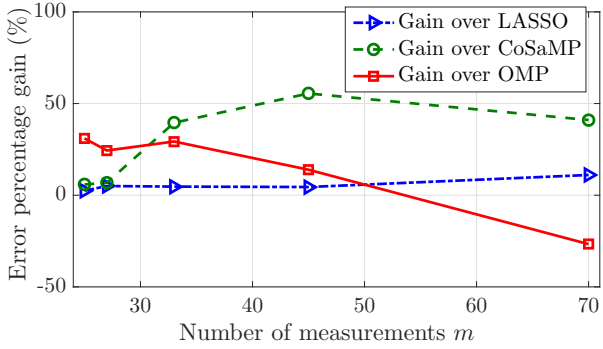}
\caption{Error gain comparison with LASSO~\cite{candes2006stable}, CoSaMP~\cite{needell2009cosamp}, and (OMP)~\cite{tropp2007signal} for SNR$=20dB$.}
\label{fig:perf2}}
\end{figure}
Observe that when the number of measurements is low, our proposed technique outperforms the other three techniques. But when the number of measurements $m$ is relatively high, our technique still performs better than CoSaMP and LASSO, but worse than OMP. However, OMP achieves this superior performance only under high number of measurements, a range that is not of interest due to its high incurred overhead.

After recovering the signal and in order to decide on the availability of the different bands, we compare the energy of the recovered signal in every band with the threshold~\cite{digham2007energy}, $ \lambda=\frac{\mathbb{E}(\|\boldsymbol{\eta}\|_{\ell_2}^2)}{m}\Big(1+\frac{Q^{-1}(P_f)}{\sqrt{1/2}}\Big)$,
where $P_f$ is a user-defined threshold for the false alarm probability. It is defined as the probability that a vacant band is detected as occupied, and is expressed as $\frac{1}{\sum_{i=1}^n\mathcal(1-{H}_i)}\sum_{i=1}^nPr(|x_i|^2
\geq \lambda|\mathcal{H}_i=0)$.
${Q^{-1}}$ is the inverse of the $Q-$function. In Fig.~\ref{fig:detec}, we plot this detection probability as a function of the false probability for a fixed average sensing SNR, where the detection probability is computed as $\frac{1}{\sum_{i=1}^n\mathcal{H}_i}\sum_{i=1}^nPr(|x_i|^2\geq \lambda |\mathcal{H}_i=1)$.
 \begin{figure}
\centering{
\includegraphics[width=1\columnwidth]{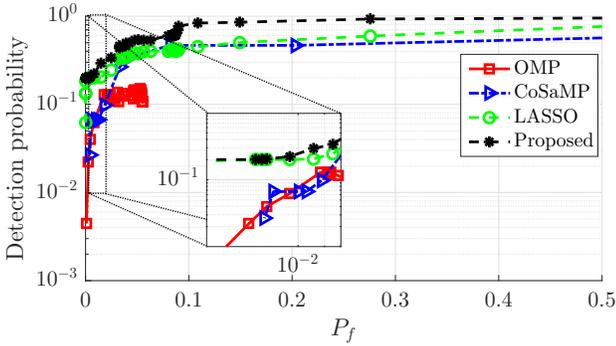}
\caption{Probability of detection as a function of the probability of false alarm with number of measurements $m=27$ and sensing SNR$=33~dB$.}
\label{fig:detec}}
\end{figure}
Although the number of measurements is less than what is required, our proposed technique has the best detection capability among all other approaches. This also confirms the result of Fig.~\ref{fig:perf2}.

\section{Conclusion}
\label{sec:conclusion}
We proposed an efficient wideband spectrum sensing technique based on compressive sampling. Our proposed technique is a weighted $\ell_1-$minimization recovery approach that accounts for the block-like structure inherent to the heterogeneous  nature of wideband spectrum allocation. We showed that the proposed approach outperforms existing approaches by achieving lower mean square errors, enabling higher detection probability, and requiring lesser numbers of measurements when compared to the-state-of-the-art approaches.

\appendices

\section{Proof of Theorem \ref{theo:error}}\label{idx:prooftheoerror}
Let us consider the average sparsity level in every block to be $\bar{k}_i=p_i.n_i$ and define the weights as $\omega_i=\frac{1}{\bar{k}_i}$ (and then we normalize it, as in Equation~\eqref{eqn:weight}, as $\omega_i=\omega_i/\sum_{j=1}^n\omega_j$). Without loss of generality, we assume that $\omega_1\leq\omega_2\leq ...\leq\omega_g$.
First, let us assume to have only knowledge of $\bar{k}_1$ to have the highest sparsity level in all the blocks. Then, we can consider the recovery problem as
\begin{equation*}
\begin{aligned}
\mathscr{P}_1^{ \omega_1,1}: \underset{x}{\text{minimize}} & & \omega_1\|\boldsymbol{x}_{1}\|_{\ell_1}+ \sum_{l=2}^g\|\boldsymbol{x}_{l}\|_{\ell_1}\\
\text{subject to}& &\|\mathcal{A}\boldsymbol{x}-\boldsymbol{y}\|_{\ell_2}\leq \epsilon.
\end{aligned}
\end{equation*}
Since we have $\omega_{1}\leq 1$, this means we encourage the search of more components of $\boldsymbol{x}$ in the first than in the second block.
We know that the set of solutions are given by $\boldsymbol{x}_0+\mathcal{N}ull(\mathcal{A})$. Ideally, its intersection with the $\ell_1-$ball gives the minimizer of $\mathscr{P}_1$. Now by introducing the weight in the first block, the weighted norm ball will be pinched towards the axis containing $\boldsymbol{x}_1$ which has, in average, lot of non-zero components. Therefore, the recovered vector from $\mathscr{P}_1^{ \omega_1,1}$ is going to be more accurate than the recovered vector from $\mathscr{P}_1$.

Now, assume to have the knowledge of $1\leq i < g$ sparsity level of $i$ blocks. Then, the optimization can be written as
\begin{equation*}
\begin{aligned}
\mathscr{P}_1^{ \omega_1, \omega_2,..., \omega_i,1}: \underset{x}{\text{minimize}} & & \sum_{l=1}^i\omega_l\|\boldsymbol{x}_{l}\|_{\ell_1}+ \sum_{l=i+1}^g \|\boldsymbol{x}_{l}\|_{\ell_1}\\
\text{subject to}& &\|\mathcal{A}\boldsymbol{x}-\boldsymbol{y}\|_{\ell_2}\leq \epsilon.
\end{aligned}
\end{equation*}
Applying the same observation, the weighted norm ball is pinched more towards the components of the denser blocks. Therefore, the performance should be at least the performance of $\mathscr{P}_1$.
Setting $l=g$, we get $\|\boldsymbol{x}^{\sharp}-\boldsymbol{x}_0\|_{\ell_2}\leq \|\boldsymbol{x}^{\dag}-\boldsymbol{x}_0\|_{\ell_2}$.
On the other hand, the bands' occupation is a random process following the bernoulli, then at some given time we may have a lesser number of non-zero components in the $i^{th}$ block than in the $j^{th}$ block with $(j>i)$, the event can be quantified as
\begin{eqnarray}\nonumber
\sum_{k=1}^{\min(n_i,n_j)}\sum_{l=0}^{k-1}\dbinom{n_i}{l}q_i^l(1-q_i)^{n_i-l}\dbinom{n_j}{k} q_{j}^k(1-q_j)^{n_j-k}.
\end{eqnarray}
Examining all the cases and taking the complementary, we get Equation~\eqref{eqn:vio}.

\section{Proof of Proposition~\ref{prop:1}}\label{idx:proofprop1}

Our proposed approach achieves a stable and robust recovery if we can find $C_0$ and $C_1$ such that
  \begin{equation}\nonumber
 \|\boldsymbol{x}^{\sharp}-\boldsymbol{x}_0\|_{\ell_2}\leq C_0 \epsilon+C_1 \frac{\sigma_k(\boldsymbol{x},\|.\|_{\ell_p})}{\sqrt{k}}.
  \end{equation}
Combining Theorem~\ref{theo:error} and ~\cite[Theorem~2]{candes2006stable}, we get (with a probability exceeding \eqref{eqn:vio})
\begin{eqnarray}\nonumber
\|\boldsymbol{x}^{\sharp}-\boldsymbol{x}_0\|_{\ell_2}&\leq&\|\boldsymbol{x}^{\dag}-\boldsymbol{x}_0\|_{\ell_2}\\ \nonumber
&\leq & C_0.\epsilon+C_1. \frac{\sigma_k(\boldsymbol{x}_0,\|.\|_{\ell_1})}{\sqrt{k}}
\end{eqnarray}
where
\begin{equation}\label{eqn:c0}
C_0=\frac{2(1+1/\sqrt{a})}{\sqrt{1-\delta_{(a+1)k}}-\sqrt{1+\delta_{ak}}/\sqrt{a}}
\end{equation}
and \begin{equation}\label{eqn:c1}
C_1=\frac{2\sqrt{1-\delta_{(a+1)k}}+\sqrt{1+\delta_{ak}}/\sqrt{a}}{\sqrt{a}\sqrt{1-\delta_{(a+1)k}}-\sqrt{1+\delta_{ak}}}
\end{equation}
with $a$ and $b$ such that $\delta_{ak}+a\delta_{(a+1)k}<a-1$.
Therefore, our approach is stable and robust.
\section{Proof of Theorem~\ref{theo:mea}}\label{idx:prooftheomea}
Prior to give the proof of the theorem, we start by providing the following lemma.

\begin{lemma}\label{lemma:meas}
Let $\bar{k}=\sum_{i=1}^g\bar{k}_i$ and $n=\sum_{i=1}^gn_i$ with $\bar{k}_i\leq n_i/2$. There exists a set $X=\bigcup_{i=1}^gX_i\subset\Sigma_{\bar{k}}$ such that for any $x=[x_1^T x_2^T ...x_g^T]$ with $x_i\in X_i$ for $i=1,\ldots,g$, we have:~\\
(1) $\|x_i\|_{\ell_2}\leq \sqrt{\bar{k}_i}$~\\
(2) for any $x,y\in X$ with $x\neq y$, $\|x_i-y_i\|_{\ell_2}\geq \sqrt{\bar{k}_i/2}$ and $\log|X|\geq \frac{\bar{k}}{\bar{2}}\log\Big(\frac{n}{\bar{k}}\Big)$.

\end{lemma}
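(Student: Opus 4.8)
The plan is to build the set $X$ block by block, applying a standard combinatorial/volumetric argument (a Gilbert--Varshamov / packing-style bound, essentially the same device used in the proof of~\cite[Theorem~1.4]{davenport2011introduction}) separately to each block $\mathcal{G}_i$, and then to take $X$ to be the Cartesian-product-like union of the per-block packings. Concretely, within block $i$ I would consider sparse binary-type vectors supported on $\bar k_i$ of the $n_i$ coordinates, scaled so that $\|x_i\|_{\ell_2} = \sqrt{\bar k_i}$ (so condition (1) holds with equality up to the scaling constant), and invoke the existence of a large subset $X_i$ of such vectors that are pairwise $\sqrt{\bar k_i/2}$-separated in $\ell_2$. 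The count of such a separated family satisfies $\log|X_i| \ge c\,\bar k_i \log(n_i/\bar k_i)$ for an absolute constant $c$ (the construction in~\cite{davenport2011introduction} gives $c = 1/2$ with the stated normalization); here the hypothesis $\bar k_i \le n_i/2$ is exactly what makes $\log(n_i/\bar k_i)$ bounded away from $0$ and keeps the binomial estimates clean.

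Next I would assemble the global set. Define $X = \{\,x = [x_1^T\,\cdots\,x_g^T]^T : x_i \in X_i\,\}$, so that $|X| = \prod_{i=1}^g |X_i|$ and hence $\log|X| = \sum_{i=1}^g \log|X_i| \ge \frac12\sum_{i=1}^g \bar k_i \log(n_i/\bar k_i)$. Since each $x \in X$ has at most $\sum_i \bar k_i = \bar k$ nonzeros, indeed $X \subset \Sigma_{\bar k}$. For two distinct $x, y \in X$ they must differ in at least one block, say the $i$-th, and there condition (2) of the per-block packing gives $\|x_i - y_i\|_{\ell_2} \ge \sqrt{\bar k_i/2}$; this is the block-wise separation claimed. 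The only remaining point is to pass from $\sum_i \bar k_i \log(n_i/\bar k_i)$ to the clean aggregate form $\bar k \log(n/\bar k)$: I would use concavity of $t \mapsto t\log(1/t)$ (equivalently, the log-sum / weighted-AM argument) together with $\bar k_i/n_i \le 1/2$ to show $\sum_i \bar k_i \log(n_i/\bar k_i) \ge \bar k \log(n/\bar k)$, so that $\log|X| \ge \frac{\bar k}{2}\log(n/\bar k)$ as stated.

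The main obstacle I anticipate is not the packing construction itself (that is essentially off-the-shelf) but getting the aggregation inequality $\sum_i \bar k_i \log(n_i/\bar k_i) \ge \bar k \log(n/\bar k)$ to go in the right direction with the right constant, and making sure the normalization of the per-block vectors is consistent with both (1) $\|x_i\|_{\ell_2} \le \sqrt{\bar k_i}$ and (2) $\|x_i - y_i\|_{\ell_2} \ge \sqrt{\bar k_i/2}$ simultaneously. I would resolve the first by writing $\bar k_i = \lambda_i \bar k$ with $\sum_i \lambda_i = 1$ and checking the inequality reduces to $\sum_i \lambda_i \log(n_i/(\lambda_i \bar k)) \ge \log(n/\bar k)$, i.e. $-\sum_i \lambda_i \log\lambda_i + \sum_i \lambda_i \log n_i \ge \log n$, which follows from concavity of $\log$ applied to $\sum_i \lambda_i \log n_i \le \log(\sum_i \lambda_i n_i)$... — here one has to be slightly careful about the direction, so I would instead invoke the log-sum inequality $\sum_i \bar k_i \log(\bar k_i/n_i) \le \bar k \log(\bar k/n)$ directly, which is exactly the statement needed after negating. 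The second point is handled by choosing the per-block codewords to be $\{0, \pm 1\}$-valued on their support (so each has squared norm exactly $\bar k_i$) and citing the Hamming-distance lower bound from the Gilbert--Varshamov argument, which forces at least $\bar k_i/2$ coordinates of disagreement and hence $\ell_2$-distance at least $\sqrt{\bar k_i/2}$.
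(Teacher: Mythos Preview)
Your block-by-block construction via the Gilbert--Varshamov packing is exactly the route the paper has in mind: the paper omits the argument entirely and simply declares it ``similar to~[Davenport, Lemma~A.1]'', so there is no written proof to compare against beyond that citation. The per-block packing, the Cartesian product assembly, and your handling of condition~(1) and of the blockwise separation in~(2) are all the natural generalization of Davenport's construction.

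The aggregation step, however, contains a genuine error. You invoke the log-sum inequality in the form $\sum_i \bar k_i \log(\bar k_i/n_i) \le \bar k \log(\bar k/n)$, but the log-sum inequality actually reads $\sum_i a_i \log(a_i/b_i) \ge (\sum_i a_i)\log\bigl(\sum_i a_i/\sum_i b_i\bigr)$, so the true direction is $\sum_i \bar k_i \log(\bar k_i/n_i) \ge \bar k \log(\bar k/n)$, equivalently $\sum_i \bar k_i \log(n_i/\bar k_i) \le \bar k \log(n/\bar k)$. In other words the Cartesian product yields $\log|X| \ge \tfrac12\sum_i \bar k_i \log(n_i/\bar k_i)$, and this lower bound is in general \emph{strictly smaller} than the claimed $\tfrac{\bar k}{2}\log(n/\bar k)$ whenever the ratios $\bar k_i/n_i$ are not all equal---which is precisely the heterogeneous regime of interest. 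A quick check: with $g=2$, $n_1=n_2=100$, $\bar k_1=50$, $\bar k_2=1$, one has $\sum_i \bar k_i\log(n_i/\bar k_i)\approx 39$ versus $\bar k\log(n/\bar k)\approx 70$. Your own hesitation (``here one has to be slightly careful about the direction'') was well-founded; the inequality simply does not go the way you need, and no amount of care will reverse it. Since the paper supplies no proof, this gap is arguably inherited from the lemma's stated size bound itself; the block-structured packing you build honestly delivers $\log|X|\ge\tfrac12\sum_i\bar k_i\log(n_i/\bar k_i)$, and carrying \emph{that} quantity through the volume argument in the proof of Theorem~\ref{theo:mea} (in place of the aggregate $\tfrac{\bar k}{2}\log(n/\bar k)$) would be the correct fix.
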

\begin{proof}
The proof of the lemma is similar to~\cite[Lemma A.1]{davenport2011introduction}. It is omitted here for brevity.
\end{proof}

The proof of the theorem is inspired from the proof in~\cite{davenport2011introduction} and based on Lemma~\ref{lemma:meas}.
First, we have $x=\sum_{i=1}^gx_i$ with $\|x_{i}\|_{\ell_0}\leq \bar{k}_i$.
Then, for any $x_i$ and $ y_i\in \Sigma_{2\bar{k}_i}$, we have according to the RIP property
\begin{eqnarray}\nonumber
\sqrt{1-\delta_{\bar{k}_i}}\|x_i-y_i\|_{\ell_2}&\leq& \|\mathcal{A}_ix_i-\mathcal{A}_iy_i\|_{\ell_2}\\ \nonumber
\|\mathcal{A}_ix_i-\mathcal{A}_iy_i\|_{\ell_2}& \leq& \sqrt{1+\delta_{\bar{k}_i}}\|x_i-y_i\|_{\ell_2}
\end{eqnarray}
Combining the above property with Lemma~\ref{lemma:meas}, we get
\begin{equation}\nonumber
\sqrt{\bar{k}_i(1-\delta_{\bar{k}_i})/2}\leq \|\mathcal{A}_ix_i-\mathcal{A}_iy_i\|_{\ell_2} \leq \sqrt{2\bar{k}_i(1+\delta_{\bar{k}_i})}.
\end{equation}
By considering the balls with radius $\tau_i$ such that $\tau_i=\sqrt{\bar{k}_i(1-\delta_{\bar{k}_i})/2}/2=\sqrt{\bar{k}_i(1-\delta_{\bar{k}_i})/8}$ centered at $\mathcal{A}_ix_i$, then these balls are disjoint. On the other hand, we have for any $x$ and $y \in \Sigma_{\bar{k}}$,
\begin{equation}\nonumber
\|\mathcal{A}x-\mathcal{A}y\|_{\ell_2} \leq \sum_{i=1}^g\|\mathcal{A}_ix_i-\mathcal{A}_iy_i\|_{\ell_2}\leq \sum_{i=1}^g \sqrt{2\bar{k}_i(1+\delta_{\bar{k}_i})}
\end{equation}
The upper bound gives an idea about the maximum distance between the centers of any pair of balls which is $d^{\max}=\sum_{i=1}^g \sqrt{2\bar{k}_i(1+\delta_{\bar{k}_i})}$. Therefore, all the balls are contained in the ball of radius $\tau=d^{\max}+\max_i(\tau_i)$.
Thus, we have
\begin{eqnarray}\nonumber
\textrm{Vol}\Big(B^m(\tau)\Big)\geq |X|\textrm{Vol}\Big(B^m(\min_i\tau_i)\Big),
\end{eqnarray}
where $\textrm{Vol}(B^m(\tau))$ is the volume of the ball which is given by $\textrm{Vol}(B^m(\tau))=\frac{\pi^{m/2}}{\Gamma (m/2+1)}{\tau}^m$ and $\Gamma(.)$ is the Euler Gamma function.
This yields
\begin{eqnarray}\nonumber
\Big(\frac{d^{\max}+\max_i(\tau_i)}{\min_i\tau_i}\Big)^m\geq |X|
\end{eqnarray}
Therefore, after applying $\log$, we get
\begin{eqnarray}\nonumber
m\geq \frac{1}{\log\Big(\frac{d^{\max}+\max_i(\tau_i)}{\min_i\tau_i}\Big)}\log(|X|)
\end{eqnarray}
Now recalling Lemma~\ref{lemma:meas}, we get $m\geq C_{\delta_{\bar{k}_1},...,\delta_{\bar{k}_{g}}}\bar{k}\log(n/\bar{k})$
where
\begin{eqnarray}\nonumber
C_{\delta_{\bar{k}_1},...,\delta_{\bar{k}_{g}}}
=&\frac{1}{2\log\Big(\frac{\sum_{i=1}^g \sqrt{2\bar{k}_i(1+\delta_{\bar{k}_i})}+\max_i(\sqrt{\bar{k}_i(1-\delta_{\bar{k}_i})/8})}{\min_i(\sqrt{\bar{k}_i(1-\delta_{\bar{k}_i})/8})}\Big)}.
\end{eqnarray}
which ends the proof.

\section{Proof of Theorem~\ref{theo:bound}}\label{idx:prooftheobound}
Let $Y=\sum_{i=1}^n{\mathcal{H}_i}$ be the random variable that contains the number of occupied bands. Since the occupation of the band is independent, then the moment generating function of $Y$ is given by
\begin{equation}\nonumber
\mathcal{M}_{Y}(t)=\prod_{i=1}^n(e^tp_i+1-p_i).
\end{equation}
 Now using the Chernoff bound, we have
 \begin{eqnarray}\nonumber
\textrm{Pr}(Y\geq k_0)&\leq& \inf_{t\geq 0}\Big\{e^{-k_0t} \mathcal{M}_{Y}(t)\Big\}\\ \nonumber
&=& \inf_{t\geq 0}\Big\{e^{-k_0t} \prod_{i=1}^n\big((e^t-1)p_i+1\big)\Big\}
 \end{eqnarray}
Using the fact that $e^x\geq 1+x$, we get
 \begin{eqnarray}\label{eqn:bnd}\nonumber
\textrm{Pr}(Y\geq k_0)&\leq& \inf_{t\geq 0}\Big\{e^{-k_0t} \prod_{i=1}^ne^{(e^t-1)p_i}\Big\}\\ \nonumber
&=& \inf_{t\geq 0}\Big\{e^{-k_0t} e^{(e^t-1)\sum_{i=1}^np_i}\Big\}\\\nonumber
&=& \inf_{t\geq 0}\Big\{\Big[ \underbrace{e^{(e^t-1)}e^{-t k_0/\sum_{i=1}^np_i}}_{(*)}\Big]^{\sum_{i=1}^np_i}\Big\}\nonumber
 \end{eqnarray}
To optimize $(*)$, we take the derivative over $t$ which yields to $t^*=\log(k_0/\sum_{i=1}^np_i)$. Now substituting $t^*$, we get
\begin{eqnarray}\nonumber
\textrm{Pr}(Y\geq k_0)&\leq& \frac{e^{k_0-\sum_{i=1}^np_i}}{(k_0/\sum_{i=1}^np_i)^{k_0}}
 \end{eqnarray}
Now since $\textrm{Pr}(Y\geq k_0)=1-\textrm{Pr}(Y\leq k_0)$, we get
\begin{eqnarray}\nonumber
1-\textrm{Pr}(Y\leq k_0)&\leq& \frac{e^{k_0-\sum_{i=1}^np_i}}{(k_0/\sum_{i=1}^np_i)^{k_0}}
 \end{eqnarray}
which gives the result of the theorem.

\bibliographystyle{IEEEtran}
\bibliography{References}
\end{document}